\newenvironment{ttbox}{\begin{alltt}\ttbraces\small\tt}%
                      {\end{alltt}}
\def\ttbraces{\let\.=\nobreak\chardef\{=`\{\chardef\}=`\}\chardef\|=`\\}
\newcommand\ttand{\mbox{{$\land$}}}
\newcommand\ttor{\mbox{{$\lor$}}}
\newcommand\ttcup{\mbox{{$\cup$}}}
\newcommand\ttfun{\mbox{{$\Rightarrow$}}}
\newcommand\ttimp{\mbox{{$\longrightarrow$}}}
\newcommand\ttequiv{\mbox{{$\equiv$}}}
\newcommand\ttexists{\mbox{{$\exists$}}}
\newcommand\ttforall{\mbox{{$\forall$}}}
\newcommand\ttneg{\mbox{{$\neg$}}}
\newcommand\ttneq{\mbox{{$\neq$}}}
\newcommand\ttin{\mbox{{$\in$}}}
\newcommand\ttImp{\mbox{{$\Longrightarrow$}}}
\newcommand\ttlam{\mbox{\( \lambda \)}}
\newcommand\tttimes{\mbox{\( \times \)}}
\newcommand\ttto[1]{\mbox{{$\to^{#1}$}}}
\newcommand\ttIFC[1]{\mbox{{\text{IFC}$_{#1}$}}}
\newcommand\ttrelI{\mbox{{$\to_{i}$}}}
\newcommand\ttrel[1]{\mbox{{$\to_{#1}$}}}
\newcommand\ttrelstar[1]{\mbox{{$\to_{#1}^*$}}}
\newcommand\ttalpha{\mbox{{$\alpha$}}}
\newcommand\ttbeta{\mbox{{$\beta$}}}
\newcommand\tttau{\mbox{{$\tau$}}}
\newcommand\ttsubseteq{\mbox{{$\subseteq$}}}
\newcommand\ttsone{\mbox{{\texttt{s}$_1$}}}
\newcommand\ttvdash{\mbox{{$\vdash$}}}
\newcommand\ttref[1]{\mbox{\(\sqsubseteq_{#1}\)}}
\newcommand\ttNI[1]{\mbox{{$\text{NI}_{#1}$}}}
\newcommand\ttsigma{\mbox{{$\sigma$}}}
\newcommand\ttmref[1]{\mbox{{$\sqsubseteq_{#1}$}}}
\newcommand\ttmeref{\ttmref{\mathcal{E}}}
\newcommand\ttecal{\mbox{$\mathcal{E}$}}
\newcommand\ttimg{\mbox{\texttt{`}}}
\newcommand\ttmaplet{\mbox{\(\to\)}}
\newcommand{\ttgzero}{\mbox{$\texttt{g}_0$}}
\newcommand{\ttgone}{\mbox{$\texttt{g}_1$}}
\newcommand{\tttleqA}[1]{\mbox{$\trianglelefteq_{#1}$}}
\newcommand{\tttleA}[1]{\mbox{$\triangleleft_{#1}$}}
\newcommand\ttszero{\mbox{{$s_0$}}}
\newcommand\ttxzero{\mbox{{$x_0$}}}
\newcommand\ttany[1]{\mbox{{${#1}$}}}
\newcommand\ttanyany[2]{\mbox{{${#1}_{#2}$}}}
\newcommand\ttsim[1]{\mbox{{$\sim_{#1}$}}}
\newcommand\ttrestr[2]{\mbox{$\left.{#1}\right|_{#2}$}}
\newcommand\ttshadow[1]{\mbox{$\texttt{shadow}_{\texttt{#1}}$}}
\begin{document}
\frontmatter
  
\mainmatter
\title{Shadowing the Isabelle Insider and Infrastructure Framework}
  \author{Florian Kamm\"uller}
\institute{Middlesex University London and TU Berlin\\
\email{florian.kammuller@gmail.com}
}
\maketitle
\begin{abstract}
In this paper, we extend the process of Security Engineering for the Isabelle Insider and Infrastructure framework (IIIf) by introducing Information Flow Security (IFC). To formalize the absence of information flows to lower levels, we use a concept of a ``Shadow'' inspired by Morgan. We relate it to the classical notion of Noninterference (NI) formalised in the IIIf. Apart from being an elegant concept, Morgan's concept of a shadow is interesting because it addresses a phenomenon called the ``refinement paradox'':
information flow security is known to be not preserved by specification refinements in general. We use the formalisation of shadow and its equivalence to NI to exhibit conditions for a secure refinement for IIIf. As a running example to illustrate the problem, the concepts and the solution, we use an example of a flightradar system specification.
\end{abstract}  
\section{Introduction}
The Isabelle Insider and Infrastructure (IIIf) \cite{kam:20a} provides formal specifications of
infrastructures with actors and policies within the interactive proof assistant Isabelle. It therefore
supports formalising and verifying security and privacy of systems centered around human actors \cite{kam:26iiif}.
The IIIf as a framework provides Isabelle formalisations of attack trees, Kripke structures and model checking
as a basis for expressing infrastructure systems as state transition systems. A notion of trace based refinement
is formalised including property preservation.
However, intricate notions like Information Flow Control (IFC) are known to be not within the reach of model
checking verification \cite{mcl:94}. Noninterference (NI) expresses the preservation of an indistinguishability relation
over states. It is thus a property over sets of system executions rather than a property on single system executions.
NI also does not refine: a specification could have the NI property, but a refinement -- even
using a rigorous notion of property preserving refinement -- might fail to preserve NI.
This paradox is coined ``the refinement paradox'' \cite{mor:09}.
Since the IIIf aims at security engineering of system specification expressed as the Refinement-Risk cycle (RR-cycle)
\cite{kam:20a,kam:23b} this is a flaw: it neither preserves NI.
In this paper, we address this issue and develop a notion of shadow following the concepts and outline of \cite{mor:09}
to characterize security preserving refinements.

As a preparation, we first present the integration of NI into the IIIf motivating it with a concise example of a
privacy-critical flightradar system introduced in Section \ref{sec:flightradar}.
We re-use here this case-study \cite{kam:24a} because it serves well for illustrating physical aspects of systems
and also suits the demonstration of the concepts of IFC and refinement preservation.
This 
formalisation of the flightradar system in IIIf 
introduces access control labels into the model (Section \ref{sec:ifcmove}).
In Section \ref{sec:ni}, we introduce the notion of NI and its proof on the example.

The novel contribution of this paper is an additional layer of 
IIIf's security engineering:
a methodology for defining a shadow, showing its equivalence to NI and investigating its refinement in IIIf.
The application to the flightradar system example serves to illustrate implicit information flows and NI in IIIf
proved on the example (Sections \ref{sec:flightradar} and \ref{sec:ni}). We then provide a definition of shadow
(Section \ref{sec:shaiiif}) and illustrate how it is applied to the flightradar system leading to identifying an
information flow invariant \texttt{\ttIFC{\texttt{a}}}.
We show that \texttt{\ttIFC{\texttt{a}}} is equivalent to NI.
Finally, we investigate the relationship between shadow, NI and refinement (Section \ref{sec:secref}) before we
conclude and discuss related work in Section \ref{sec:concl}.
The source code of the IIIf with IFC and its application to the flightradar system is available
online \cite{kam:26iiif}.
To set the scene, we use the remainder of this section to introduce IIIf.

\subsection{Isabelle Insider and Infrastructure framework IIIf}
The IIIf is an extension of Higher Order Logic (HOL) in the interactive generic theorem
prover Isabelle/HOL \cite{npw:02}. It thus augments the automated reasoning capabilities of Isabelle
with dedicated support for modeling and proving of systems with physical and logical components,
actors and policies. The IIIf has been designed for the analysis of insider threats. However, the
implemented theory of temporal logic combined with Kripke structures and its generic notion of state
transitions are a perfect match to be combined with  attack trees into a process for formal security
engineering \cite{suc:16} including an accompanying framework \cite{kam:19a}.
A number of case studies (see Section \ref{sec:concl})
have contributed to shape the IIIf into a general framework for
the state-based security analysis of infrastructures with policies and actors. Temporal logic
and Kripke structures are deeply embedded into Isabelle/HOL thereby
enabling meta-theoretical proofs about the foundations: for example, equivalence between attack trees 
and CTL statements have been established \cite{kam:18b} providing sound foundations for applications.
This foundation provides a generic notion of state transition on which attack trees and
temporal logic can be used to express properties for applications.

The use of Isabelle/HOL 
for the construction of the IIIf permit a meta-theoretical approach: 
concepts like temporal logics, model checking, and now IFC can be formalized in the logic while at
the same time these concepts can be applied to case studies.
However, there are limits to abstraction due to types. The equivalence between NI and shadow
can only be formalized and proved for infrastructures in general on paper because it quantifies over
differently typed components of an infrastructure state. 
The methodology lives at a level even too high for Higher Order Logic. 

\section{Flightradar}
\label{sec:flightradar}
Privacy as well as security may be endangered when there appear daily differences on airplane routes.
Such irregularities may reveal secret information, for example, certain areas cannot be overflown because
of security alerts, or security or privacy critical passengers are on board of the airplanes so that higher
security precautions impose maximally safe airplane routing. 
These precautions are intended to increase security and privacy by physical measures. However, on the level
of confidential information they represent implicit information flows.
Geographic Information Systems (GIS) offer the possibility to obfuscate private information by blurring
the details of the maps thus establishing some level of privacy \cite{kam:24a}.  
However, this does not scale to dynamic information as needed for example for airtraffic control systems
where blurring could cause implicit information flows.
In formal techniques for software verification such implicit information flows are well known and
countermeasures are well understood in an area called Information Flow Control (IFC).
In addition to Access Control, which regulates access to objects, IFC ``specifies valid channels along
which information may flow'' \cite{den:82}.
It serves to secure flows of information in software systems by labeling values with security classes
and enforcing the security policy by only allowing information to flow between classes if they are in a
flow relation \cite{dd:77}. IFC has become a standard technique in the security specification and verification
of software systems of all shades but is less well understood when it comes to more heterogeneous systems
including physical as well as logical aspects and human actors.

\subsection{Air-traffic information systems}
The American Federal Aviation Administration (FAA) publishes radar data but systems like flightradar24
(see Figure \ref{fig:flightradar}) rely on planes knowing their own position quite accurately via GPS
and broadcasting them over the Automatic Dependent Surveillance Broadcast (ADS-B). Partially private
ADS-B spotters in various locations receive this data and forward it to the web-service. Not all planes can
be spotted, for example military planes and private planes. It depends on them having ADS-B systems on board.
Also, it is possible to agree with the web services not to show one's plane -- or to show it in a more
secure way. 
The IIIf \cite{kam:24a} provides suitable foundations and formalizes the specification of this system.
We will use this as a basis in this paper.
\begin{figure*}
\vspace{-.5cm}
  \begin{center}
  \includegraphics[scale=.3]{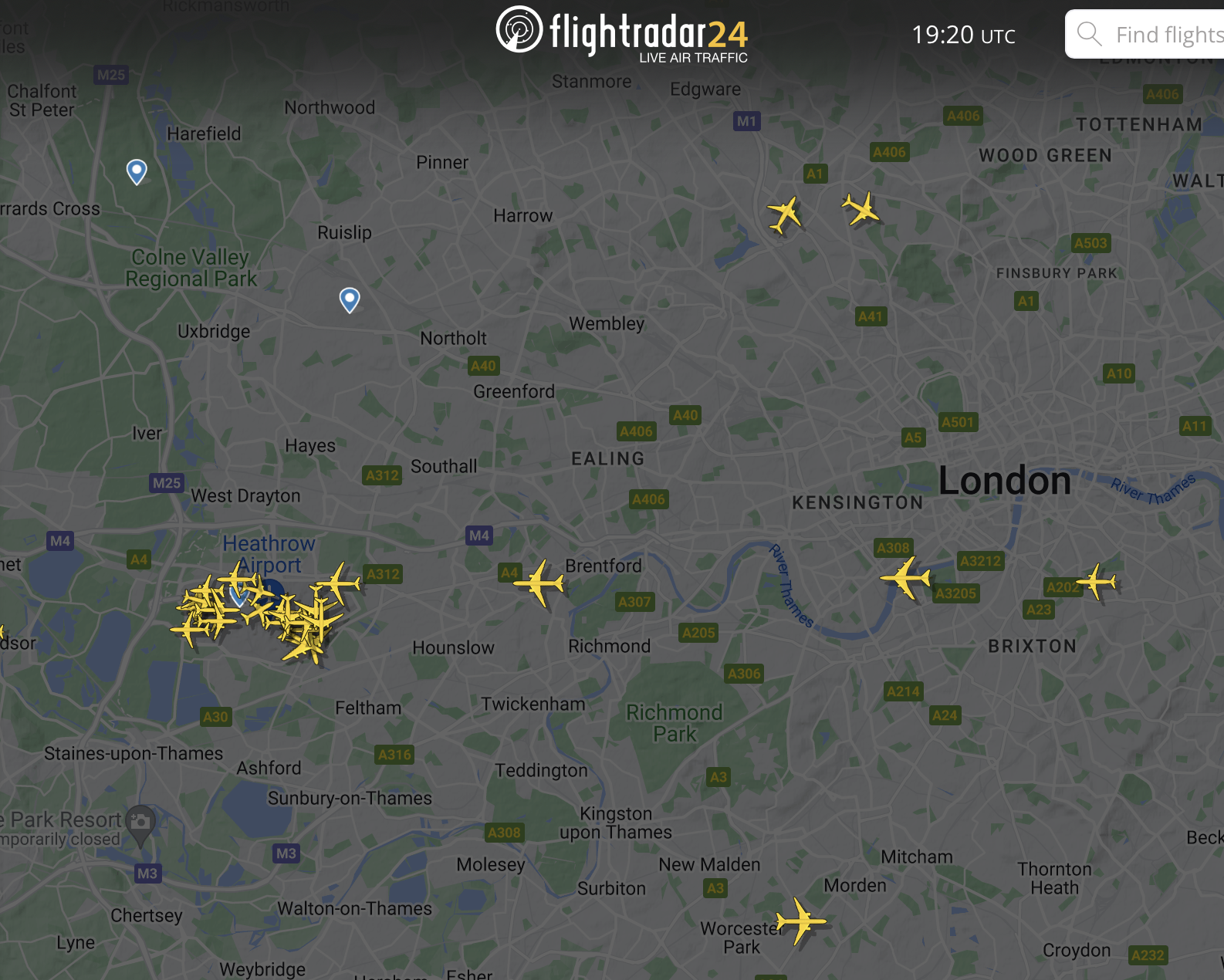}
\end{center}
\vspace{-.5cm}
\caption{Flightradars like  flightradar24 \cite{fli:24} visualize airtraffic routes publicly.}
\label{fig:flightradar}
\end{figure*}
In the IIIf, infrastructures are supposed to be presented as graphs of locations (which could be physical or logical)
and actors as well as local policies are attached to these locations. There is no fixed location type; in the
flightradar case study, locations are represented as pairs of natural numbers thus representing a discrete 2-dimensional
grid.
Infrastructures entail the graphs and the policies and constitute the {\it states} of the system.
The underlying theory of Kripke structures and temporal logics CTL is formalised as the core framework of the IIIf
over a generic state transition relation (using Isabelle's type classes) and building up the entire CTL logic from
fixpoint definitions and Tarski's fixpoint theory.
This generic core framework is then instantiated with an application specific graph type that extends the above mentioned
components by adding some more application specific ones and defining a specific state transition relation for the
application scenario. This has been done for the flightradar system in \cite{kam:24a}. It will be used here
to provide the notion of NI for infrastructures that we want to investigate and extend with respect to refinement.

\subsection{Security considerations}
The function \texttt{circumvent} computes an exceptional flightpath based on the current position \texttt{(l,l')} and the
planned next one \texttt{(n, n')} 
on the normal route to compute the closest alternative position.

This circumvention now would represent an illicit information flow, that is, unauthorized users learn
confidential information about the criticality of the location.

As is common and meaningful in security models for program security, it is assumed that Eve knows the implementation
of the flight radar 
system as well as all the visible output data, like the positions of the planes in \texttt{planes} but she does
not know the security critical data, that is \texttt{critloc}.
However, via the above implicit flow Eve effectively gets to know \texttt{critloc} from the behaviour of the
airplanes:

\noindent%
{\it a plane circumvents a position} \texttt{(n,n')} {\bf if and only if} \texttt{critloc (n, n')}.

\noindent%
Therefore, there is a leakage of confidential critical information of the positions to a visible output. In
order to specify and potentially provably exclude this security issue, it is necessary to first add an explicit way
to express security. This is done next by introducing security labels to the infrastructure specification.

Every value (data or identity) that is used or communicated within the evolution of the infrastructure,
that is, in the execution of the state transition has a security label associated to it representing the
access control for that item; for an object, we refer to this as \textit{level} (\texttt{lv}), for a subject
we refer to it as \textit{clearance} (\texttt{clrnc}).
There is a module in the IIIf for  Information Flow Control \texttt{IFC.thy} \cite{kam:24a}, that formalises
a generic lattice model for security classes for access control and in particular provides instances
of this lattice for multi-level security (MLS) \cite{de:76} and the Decentralized Label Model (DLM) \cite{ml:98}.
However, it may well be also instantiated to other access control models, for example Role-Based Access Control (RBAC) or Attribute Based Access Control (ABAC).

\subsection{State transition with security labels and hiding}
\label{sec:ifcmove}

The infrastructure graph is defined as a datastructure \texttt{igraph} whose components \texttt{gra},
\texttt{planes}, \texttt{routes}, and \texttt{critloc} represent
the parts as:
a set of pairs of locations -- the coordinates of the map for the flightradar system;
the identities of the airplanes at each location;
the routes of each airplane and the criticality of each location. The latter is a boolean valued function
(predicate) as criticality is assumed to be a boolean flag.
The constructor \texttt{Lgraph} puts these components into an \texttt{igraph}.

Each component of the infrastructure graph datatype \texttt{igraph}
is paired with an access control label of type \texttt{ac} as first element. 
In addition, the (secret) circumvented position of a plane is added as a new component
\texttt{critpos}.
This will allow us to implement a different behaviour of the flight radar that avoids revealing information by
implicit flows. 
The updated \texttt{igraph} datatype looks now as follows.
\begin{ttbox}
{\bf{datatype}} igraph = Lgraph
{\bf gra:} ac \tttimes (location \tttimes location)set 
{\bf planes:} ac \tttimes ((location \tttimes location) \ttfun identity set)
{\bf routes:} ac \tttimes (identity \ttfun (location \tttimes location)list)
{\bf critloc:} ac \tttimes ((location \tttimes location) \ttfun bool)
{\bf critpos:} ac \tttimes (identity \ttfun (location \tttimes {location})option)
\end{ttbox}
The \texttt{ac} labels are the first component of each \texttt{igraph} element and can thus be accessed using
the function \texttt{fst} that returns the first element of a pair. For convenience, we define the following
syntactic sugar that allows henceforth to write \texttt{lv x} to access the \texttt{ac} label of any of the fields
of an \texttt{igraph}. Similarly, we provide a selector for the \textit{value} by the pair projection \texttt{snd} 
\begin{ttbox}
{\bf definition} lv :: (\ttalpha \tttimes \ttbeta) \ttfun \ttalpha {\bf where} lv = fst
{\bf definition} vl :: (\ttalpha \tttimes \ttbeta) \ttfun \ttalpha {\bf where} vl = snd

\end{ttbox}
The security labels are assigning an overall label to each of the components. For example for \texttt{critloc},
this label comprises all the critical and non-critical locations. That is, the labels do not specify on a fine
grained detail the security level of each location.

To integrate the access control given by the security labels the flight radar specification 
uses hiding to cover the real position of an airplane if it is circumvented around a critical location.
Consequently, in the two semantic rules for moving an airplane, a check is added to evaluate whether
the next position of the airplane is critical.
There are two rules for the move action: \texttt{move} and \texttt{move\_crit}.
The first rule is for the case \texttt{\ttneg critloc G (n, n')} saying that the next position on the
flightpath is not critical.
Here, the rule simply updates the \texttt{planes} component at the current position \texttt{(l,l')} and
the next position \texttt{(n,n')} on the flightpath by removing the airplane \texttt{f} from the
former and placing it on the latter position. 
The action also sets back the \texttt{critpos G} flag for the flight to delete a previous
``circumvented'' position in case the previous position had been critical (see the following action \texttt{move\_crit}
for more detail on the latter point).
\begin{ttbox}
{\bf{move}}: G = graphI I \ttImp (l, l') \ttin  gra G \ttImp f \ttin planes G (l, l') \ttImp
 (n,n') = hd(routes G f) \ttImp \ttneg critloc G (n, n') \ttImp
 I' = Infrastructure
 (Lgraph
 (gra G)
 (lv(planes G), vl(planes G)((l,l'):= vl(planes G)(l,l')-\{f\})
                            ((n,n'):= vl(planes G))(n,n') \ttcup \{f\})
 (lv(routes G), vl(routes G)(f := tl (vl(routes G) f)))
 (critloc G)
 (lv (critpos G), vl (critpos G)(f := None)))
 \ttImp I \ttrelI I' 
\end{ttbox}
In the case for moving over a critical location, 
the plane is not circumvented in the component \texttt{planes} but apparently remains ``on course''.
Instead, the additional component \texttt{critpos} registers that the flight is in fact circumvented.
Modeling the circumvention this way allows hiding the actual position easily by using the \texttt{planes}
component as the ``public'' (or official) position while hiding the real position in \texttt{critpos}.
\begin{ttbox}
{\bf{move\_crit}}: G = graphI I \ttImp (l, l') \ttin gra G \ttImp f \ttin planes G (l, l') \ttImp
 (n,n') = hd(routes G f) \ttImp critloc G (n, n') \ttImp
 I' = Infrastructure
 (Lgraph
 (gra G)
 (lv(planes G), (vl(planes G))((l,l'):= (vl(planes G))(l,l')-\{f\})
                              ((n,n'):= (vl(planes G))(n,n')\ttcup\{f\}))
 (lv(routes G), (vl(routes G)(f := tl (vl(routes G) f)))
 (critloc G) 
 (lv (critpos G), (vl(critpos G))(f \ttmaplet (circumvent (l,l')(n,n'))))))
\ttImp I \ttrelI I' 
\end{ttbox}
Those airplanes that are circumventing critical positions are registered in the \texttt{planes}
component of the infrastructure graph as being {\it at} the critical location. This implementation hides
that these planes are actually circumvented; the actual position is only recorded in the state
component \texttt{critpos}.

\subsection{Noninterference}
\label{sec:ni}
IFC can be expressed for programming languages as the notion of noninterference (NI) \cite{gm:82}.
In our framework of state-based system specification, NI is expressed as an equivalence over sequences
of state transitions. For other formalisms, that are more event oriented, like reactive systems, it may
be expressed similarly but over event traces.
Intuitively, NI characterizes that various executions of a system do not reveal any secret information
to a specific viewpoint -- that of the attacker. The attacker's viewpoint can be identified 
by an access control class, generally that of an observer of the system transitions.
From the viewpoint of this  observer, that is, its access control class, the system states appear equal
even if they contain different secret information. 
This equivalence relation is called indistinguishability.
Formally, the indistinguishability relation is first defined as a relation over infrastructure graphs 
by the following case statement
where $A$ is the set of actors of the infrastructure and the security class of the observer (attacker)
is \texttt{a}.
\begin{ttbox}
{\bf{definition}} indistinguishability\_g  
{\bf{where}} indistinguishability\_g \ttgzero a \ttgone \ttequiv
if lv (gra \ttgzero) \tttleqA{A} a then vl(gra \ttgzero) = vl(gra \ttgone) else True \ttand
if lv (planes \ttgzero) \tttleqA{A} a then vl(planes \ttgzero) = vl(planes \ttgone) else True \ttand 
if lv (routes \ttgzero) \tttleqA{A} a then vl(routes \ttgzero) = vl(routes \ttgone) else True \ttand
if lv (critloc \ttgzero) \tttleqA{A} a then vl(critloc \ttgzero) = vl(critloc \ttgone) else True \ttand
if lv (critpos \ttgzero) \tttleqA{A} a then vl(critpos \ttgzero) = vl(critpos \ttgone) else True
\end{ttbox}
The indistinguishability relation over infrastructure graphs is lifted to states, that is, the infrastructure
type.
For states $\ttszero$ and $\ttsone$ their indistinguishability can be written as
$\ttszero$ {\ttsim{\texttt{a}}} $\ttsone$.
\begin{ttbox}
{\bf{definition}} indistinguishability ("(\_ \ttsim{(\_)} \_)")
{\bf{where}} \ttszero {\ttsim{\texttt{a}}} \ttsone = indistinguishability\_g (graphI s0) a (graphI s1)
\end{ttbox}  

NI lifts the notion of indistinguishability of states to those of execution sequences of states:
if two states $s_0, s_1$ are indistinguishable, that is, $s_0 \sim_\texttt{a} s_1$, then
for any next state $s_0'$ of $s_0$, that is $s_0 \to s_0'$, 
there exists a state $s_1'$ reachable from $s_1$, that is, with $s_1 \to^* s_1'$,
such that $s_0' \sim_\texttt{a} s_1'$. The states $s_0'$ and $s_1'$ are also indistinguishable and thus indistinguishability
is preserved by the state transition relation.

The property of NI is proved as a theorem for the flightradar system.
The indistinguishable states \texttt{\ttszero} and \texttt{\ttsone} are assumed to be reachable states.
The initial conditions are proved to be invariant. Thus, they hold for \texttt{\ttszero} and \texttt{\ttsone}.
\begin{ttbox}
{\bf{theorem}} noninterference:
 I \ttto{*} \ttszero \ttImp I \ttto{*} \ttsone \ttImp
 clrnc(gra(graphI I)) = lv(gra(graphI I)) \ttImp
 lv(gra(graphI I)) = lv(planes(graphI I)) \ttImp
 lv(planes(graphI I)) = lv(routes(graphI I)) \ttImp
 lv(routes(graphI I)) \tttleA{A} lv(critpos(graphI I) \ttImp
 lv(critpos(graphI I)) = lv(critloc(graphI I)) \ttImp
 \ttszero \ttsim{\texttt{a}} \ttsone \ttImp \ttszero \ttto \ttszero' \ttImp \ttexists \ttsone'. \ttsone \ttto{*} \ttsone' \ttand \ttszero' \ttsim{\texttt{a}} \ttsone'
\end{ttbox}

\section{Shadow definition and example}
\label{sec:shaiiif}
The shadow is inspired by Morgan \cite{mor:09} 
representing all values that a program variable might ``possibly'' have. 
Morgan used an epistemic logic, where the ``possibly'' operator is as usual given as
$\neg(K \neg p)$. In a simple programming language, this allows to express the uncertainty of an
attacker about variable assignments as $\neg(K \neg (x = v))$; 
the set of possible values $v$ of variable $x$ is the shadow.
A refinement is secure if it does not shrink the shadow -- the uncertainty of the attacker must remain
at least the same.
Unlike Morgan, we do not focus on simple programming languages. Our ``language'' are infrastructures, actors,
policies and state transition. Our semantics is the state transition relation describing the infrastructure 
system behaviour. Hence, it is given by the inductive definition in IIIf defining the state transition relation.

The intuition of the shadow, however, is similar to Morgan, the ``ignorance'' of the attacker. That is, the shadow of
a security critical state component contains all the possible values this state component may possibly have
during state transition execution. The shadow thus corresponds to the type of a state component where we mean the
interpretation of a type as the set of all the values it entails. In Isabelle, and hence in IIIf, the set of elements
of a type $\alpha$ can be expressed as \texttt{UNIV :: \ttalpha}.



\subsection{Shadow definition}
We call the attacker \texttt{h} and and their access level \texttt{a = clrnc(h)}.
To implant the shadow with the right security controls, we define information flow dependencies between
the components of the infrastructure state.
\begin{definition}[IF $c_0 \to C$]\label{def:ifctc}
  We say state component $c_0$ has an information flow to the set of components $C$,
  written IF $c_0 \to C$,
  if for some state $s_0$, there is a state $s_1$ with $c_0(s_0) \neq c_0(s_1)$
  while $\forall c \in C.\ c(s_0) = c(s_1)$ and $\exists s_0'.\ s_0 \ttrelI s_0'$, such that for all
  $s_1'$ with $s_1 \ttrelstar{i} s_1'$ we have $\exists c \in C.\ c(s_0') \neq c(s_1')$.
\end{definition}

A shadow is introduced for all security critical state components, that is, all components above
the attacker's clearance level \texttt{a} that have an information flow IF $c_0 \to C$ to components
visible to \texttt{a}.
\begin{definition}[Shadow]\label{def:shadow}
\begin{enumerate}
\item For every component $c_i$ whose level is above the attacker's clearance, i.e \texttt{a}$\tttleA{A}$ \texttt{lv($c_i$)},
  we add an additional shadow component \texttt{shadow\_}$c_i$ to the infrastructure datatype, that is, the state.
\item As an initial value, a shadow component of non-function type $\tau$ is assigned the full universe of its type,
  that is, the set \texttt{UNIV :: \tttau}. For components of function type, the initial value is the function assigning
  the universe as the output for each input. For example, a component $c$ of type \texttt{\tttau \ttfun bool} 
  has initial value \texttt{\ttlam x :: \tttau. \{True, False\}}.
\item Each shadow component \texttt{shadow\_}$c_i$ has a ``shrink-shadow'' clause that reduces the universe initially
  assigned to it. This reduction encodes the uncertainty lost in a step.
  For example, the same component $c$ of type \texttt{\tttau \ttfun bool} with initial value
  \texttt{\ttlam x :: \tttau. \{True, False\}} may be reduced for input \texttt{\ttxzero}, say, by updating 
  the function to \texttt{(\ttlam x :: \tttau. \{True, False\})(\ttxzero := \{True\})}
\item Given the attacker level \texttt{a} the shadow component \texttt{shadow\_}$c_0$
  with protection level \texttt{lv}($c_0$) is guarded by an if-then-else statement shrinking the shadow if \texttt{a}
  is above any component dependent on $c_0$, that is, 
    above the \textbf{greatest lower bound}\footnote{Since
    we use the provided \texttt{IFC} theory, we may assume the type \texttt{ac} of labels to be a lattice. Thus
    greatest lower bounds (glb) and least upper bounds (lub) exist for all sets of security labels.}
    of all levels of state components $c$ dependent on $c_0$ and visible to the attacker, i.e. $\texttt{glb}(\texttt{lv'}(\text{IF}\ c_0\to \{ c.\ \texttt{lv}(c) \tttleqA{A} \texttt{a}\}))$.
\end{enumerate}  
  Summarizing, the shadow components are of the form
\begin{ttbox}
  shadow\_\ttanyany{c}{0} = if \textbf{glb}(\texttt{lv'}(IF \ttanyany{c}{0} \ttto{} \{\ttany{c}. lv(c) \tttleqA{A} a \})) \tttleqA{A} a  
             then "\textit{shrink\_shadow}" else shadow\_\ttanyany{c}{0}
  \end{ttbox}  
\end{definition}
The shadow \texttt{must not} shrink during system execution if the security critical fields should be kept confidential.
This can be expressed as a ``shadow'' invariant, called \texttt{\ttIFC{\texttt{a}}}, of the state transition relation.
\begin{ttbox}
{\bf definition} \ttIFC{\texttt{a}} = \ttforall I s s'. I \ttrelI s \ttImp  a \tttleA{A} lv (critloc G) \ttImp
    s \ttrelI s' \ttImp shadow\_critloc s = shadow\_critloc s'
\end{ttbox}  
In a naive implementation of the flightradar without \texttt{critpos} and hiding (see \cite{kam:24a}),
this invariant does not hold, even though \texttt{a \tttleA{A} lv (critloc G)}. However, as we will see next, in
the secured version it does not shrink.

\subsection{Shadow of the secured example does not shrink}
\label{sec:secexshad}
%
Since the \texttt{igraph} component \texttt{curpos} stores the hidden real location in the case of
an airplane circumventing a critical location, this component is directly coupled with \texttt{critloc}.
Both are identified by the global policy as security critical.
So, for the example, we define a shadow only for \texttt{critloc} and \texttt{curpos}
as security critical components to be hidden from the attacker \texttt{a}.
The shadow is implemented into the graph by adding \texttt{shadow} components for the two security critical
variables \texttt{critloc} and \texttt{critpos} which are sets of all possible values they could take.
\begin{ttbox}
 {\bf shadow\_critloc} :: (location \tttimes location) \ttfun bool set
 {\bf shadow\_critpos} :: identity \ttfun (location\tttimes{location})set
\end{ttbox}  
These shadow components are assigned initially to the values corresponding to those sets of all possible
values
that these two state components may take on during the execution of the state transition.
In the initial state \texttt{I} these values are defined by the universes of the types
(Definition \ref{def:shadow}(2)).
\begin{ttbox}
{\bf shadow\_critloc} I = \ttlam (l,l'). \{True, False\}
{\bf shadow\_critpos} I = \ttlam f.\{x::(location\tttimes{location}). True\}
\end{ttbox}
The semantics of the shadow is added into the inductive definition of the state transition relation.
The rules for \texttt{move} and \texttt{move\_crit} are adapted by integrating guards for both
shadow components (Definition \ref{def:shadow}(4)). Different to the previous example,
the \texttt{glb} term is over their own levels because there are no implicit flows to lower classes
(Definition \ref{def:ifctc}).

\begin{ttbox}
{\bf{move}}: G = graphI I \ttImp (l, l') \ttin  gra G \ttImp h \ttin actors\_graph G \ttImp
  clrnc h = a \ttImp f \ttin planes G (l, l') \ttImp (n,n') = hd(routes G f) \ttImp
  \ttneg critloc G (n, n') \ttImp
  I' = Infrastructure
  (Lgraph
   (gra G)
   (lv (planes G), vl(planes G)((l,l'):= vl(planes G) (l,l')-\{f\})
                               ((n,n'):= vl(planes G)(n,n') \ttcup \{f\}))
   (lv (routes G), vl (routes G)(f := tl(vl(routes G) f)))
   (critloc G)
   (lv (critpos G), (vl (critpos G))(f := None))
   (if (glb\{lv(critloc G), lv(critpos G)\})\tttleqA{A} a
    then (shadow\_critloc G)((l,l'):=\{False\}) else (shadow\_critloc G))
   (if (glb\{lv(critloc G), lv(critpos G)\})\tttleqA{A} a
    then (shadow\_critpos G)(f:= \{(n,n')\}) else (shadow\_critpos G)))
\ttImp I \ttrelI I' 
\end{ttbox}
The difference between the shadow definition now as compared to the insecure example above is the additional
component \texttt{shadow\_critpos}.
\begin{ttbox}
{\bf{move\_crit}}: G = graphI I \ttImp (l, l') \ttin  gra G \ttImp h \ttin actors\_graph G \ttImp
 clrnc h = a \ttImp f \ttin planes G (l, l') \ttImp (n,n') = hd(routes G f) \ttImp
 critloc G (n, n') \ttImp
 I' = Infrastructure
  (Lgraph
   (gra G)
   (lv(planes G), vl(planes G)((l,l'):= vl(planes G)(l,l')- \{f\})
                    ((circumvent(l,l')(n,n')):= vl(planes G)(n,n')\ttcup\{f\}))
   (lv(routes G), vl(routes G)(f := tl(vl(routes G) f)))
   (critloc G)
   (lv (critpos G), vl(critpos G)(f \ttmaplet (circumvent (l,l')(n,n'))))
   (if (glb\{lv(critloc G), lv(critpos G)\})\tttleqA{A} a
    then (shadow\_critloc G)(l,l'):=\{True\} else (shadow\_critloc G))
   (if (glb\{lv(critloc G), lv(critpos G)\}))\tttleqA{A} a 
    then (shadow\_critpos G)(f:= \{ circumvent(l,l')(n,n') \})
    else (shadow\_critpos G)))
 \ttImp I \ttrelI I' 
\end{ttbox}
We can now prove 
that the shadow invariant \texttt{\ttIFC{\texttt{a}}} holds.
\begin{ttbox}
{\bf{theorem}} \ttIFC{\texttt{a}}: \ttforall I s s'. I \ttrelI s \ttImp s \ttrelI s' \ttImp
  a \tttleA{A} lv(critloc (graphI s))) \ttImp a \tttleA{A} lv(critpos (graphI s))) \ttImp
shadow\_critloc s = shadow\_critloc s' \ttand shadow\_critpos s = shadow\_critpos s'  
\end{ttbox}

\subsection{IFC and NI}
\label{sec:IFCNI}
The shadow definition allows us to characterize that an implicit illicit information flow exists by
defining the shadow according to Definition \ref{def:shadow} which uses Definition \ref{def:ifctc} and
then dis-proving the shadow invariant \texttt{\ttIFC{\texttt{a}}}.
This means that if the shadow invariant holds, there are no illicit information flows.
We can thus expect \texttt{\ttIFC{\texttt{a}}} and NI to be equivalent.
We will elaborate this by proving (on paper) an equivalence theorem in this section and later
using it for secure refinement.

\subsubsection{Negation of NI}
\label{eq:nNI}
We expect that if NI does not hold, there are information flows.
NI is defined with respect to an attacker viewpoint \texttt{a} as \ttNI{\texttt{a}}:
\[ \ttNI{\texttt{a}}  \ttequiv  \forall s_0 s_1 s_1'.\ (s_0 \ttsim{\texttt{a}} s_1 \wedge s_0 \to s_0') \longrightarrow 
                \exists s_1'.\ s_1 \ttrelstar{i} s_1' \wedge s_0' \ttsim{\texttt{a}} s_1' \]
The negation of this formula is 
\[
\neg \ttNI{\texttt{a}}  \equiv  \exists s_0 s_1 s_1'.\ s_0 \ttsim{\texttt{a}} s_1 \wedge s_0 \to s_0' \wedge
                \forall s_1'.\ s_1 \ttrelstar{i} s_1' \longrightarrow s_0' \not\sim_\texttt{a} s_1' \hspace{1cm}(4.4)
\]
\subsubsection{Equivalence of $\neg \text{NI}_\texttt{a}$ and IF $c_0 \to C$}
The last line of the previous section (\ref{eq:nNI}) shows that $\neg \text{NI}_\texttt{a}$ is  almost identical
to IF $c_0 \to C$ (see Definition \ref{def:ifctc}).
We use this resemblance to show the equivalence relation between $\neg \text{NI}_\texttt{a}$ and IF $c_0 \to C$.
\begin{theorem}\label{thm:nNIeqIF}
  \[
  \neg \text{NI}_\texttt{a} \equiv \exists c_0.\ \texttt{a} \tttleA{A} \texttt{lv}(c_0) \wedge \text{IF}\ c_0 \to \{ c. \ \texttt{lv}(c) \tttleqA{A} \texttt{a}\} \neq \varnothing \]
\end{theorem}
\begin{proof}
  (``$\Leftarrow$''): With Definition \ref{def:ifctc}, we have $c_0$ with $\texttt{a} \tttleA{A} \texttt{lv}(c_0)$
  and $s_0, s_1, s_0'$ such that $c_0(s_0) \neq c_0(s_1)$, $s_0 \ttrelI s_1$,
  $\forall c \in \{c.\ \texttt{lv}(c) \tttleqA{A} \texttt{a}\}.\ c(s_0) = c(s_1)$ and
  $\forall s_1'.\ s_1 \ttrelstar{i} s_1'$ there is a
  $c \in \{ c.\ \texttt{lv}(c) \tttleqA{A} \texttt{a}\}.\ c(s_0')\ttneq c(s_1')$. That is, $s_0' \not\sim_\texttt{a} s_1'$.

  (``$\Rightarrow$''): From $\neg \text{NI}_\texttt{a}$, we have with (4.4) $s_0, s_1, s_1'$ with $s_0 \sim_\texttt{a} s_1$,
  $s_0 \ttrelI s_0'$ and $\forall s_1'. s_1 \ttrelstar{i} s_1' \ttimp s_0' \not\sim_\texttt{a} s_1'$ (1).
  We first conclude that $s_0 \neq s_1$ because otherwise we could set $s_1'$ to $s_0'$ thus
  $s_1 \ttrelI s_1' = s_0'$ and $s_0' \not\sim_\texttt{a} s_1' = s_0'$ contradicting reflexivity of $\sim_\texttt{a}$.
  Thus, there is a state component $c_0$ with $c_0(s_0) \neq c_0(s_1)$ (because $s_0 \sim_\texttt{a} s_1$ and $s_0 \neq s_1$)
  and $\texttt{a} \tttleA{A} \texttt{lv}(c_0)$.
  Let $s_1'$ be some state such that $s_1 \ttrelstar{i} s_1'$. Using $(1)$, then we have $s_0' \not\sim_\texttt{a} s_1'$.
  That is, $\exists c.\ \texttt{lv}(c) \tttleA{A} \texttt{a}$ and $c(s_0') \neq c(s_1')$. That is with Definition
  \ref{def:ifctc}, we have $\text{IF}\ c_0 \to \{ c. \ \texttt{lv}(c) \tttleqA{A} \texttt{a}\} \neq \varnothing$. \hfill{$\Box$}
\end{proof}  

\subsubsection{$\text{IFC}_{\texttt{a}}, \text{NI}_\texttt{a}$ and IF $c_0 \to C$}
How does the equivalence between $\text{NI}_\texttt{a}$ and the Definition of IF $c_0 \to C$ relate
to the shadow? Definition \ref{def:shadow} gives us that for all $c_0$  with $\texttt{a} \tttleA{A} \texttt{lv}(c_0)$
a shadow component is added and guarded by an if-then-else statement that triggers a "\textit{shrink-shadow}" function
if there is 
some $c \in C$ 
visible to \texttt{a} that receives an information flow.
Thus, the shadow invariant $\ttIFC{\texttt{a}}$ holds iff 
  $\neg(\exists c_0.\ \texttt{a} \tttleA{A} \texttt{lv}(c_0) \wedge \text{IF} c_0 \to \{c.\ \texttt{lv}(c) \tttleqA{A} \texttt{a}\} \neq \varnothing)$.
Because of the equivalence we proved in Theorem \ref{thm:nNIeqIF}, we thus can follow immediately that
the shadow invariant is equivalent to NI
\[ \ttIFC{\texttt{a}} \equiv \text{NI}_\texttt{a}\,. \]

\section{Secure Refinement in IIIf}
\label{sec:secref}
\subsection{Definition of Refinement in IIIf}

Refinement of infrastructures within the IIIf framework is a
relation on Kripke structures that is parameterized by
a polymorphic function that maps the refined type to the abstract type.
This relation \texttt{refinement} is typed as a relation over triples --
a function from a threefold Cartesian product to \texttt{bool}, the 
type containing constants \texttt{True} and \texttt{False} only.  
The type variables $\sigma$ and $\sigma'$ in the type constructor 
\texttt{Kripke} represent the abstract state type and the concrete state type.
Consequently, the middle element of the triples selected by the relation 
\texttt{refinement} is a function of type $\sigma' \Rightarrow \sigma$ 
mapping elements of the refined state to the abstract state.
The expression in quotation marks after the type is the syntax definition
which allows to write a refinement of an abstract Kripke structure \texttt{K} to
a concrete one \texttt{K'} over a type map $\varepsilon$ as \texttt{K \ttmeref K'}.
\begin{ttbox}
refinement :: (\ttsigma kripke \tttimes (\ttsigma' \ttfun \ttsigma) \tttimes \ttsigma' kripke) \ttfun bool ("_ \ttmref{(\_)} _")
 K \ttmeref K' \ttequiv \ttforall s' \ttin states K'. \ttforall s \ttin init K'.
              s \ttrelstar{\sigma'} s' \ttimp \ttecal(s) \ttin init K \ttand \ttecal(s) \ttrelstar{\sigma} \ttecal(s')
\end{ttbox}
The definition of \texttt{K \ttmeref\, K'} states that for any state $s'$ 
of the refined Kripke structure that can be reached by the state transition
in zero or more steps from an initial state $s$ of the refined Kripke 
structure, the mapping ${\mathcal E}$ from the refined to the abstract 
model's state must preserve this reachability, i.e., the image of
$s$ must also be an initial state and from there the image of $s'$
under ${\mathcal E}$ must be reached with $0$ or $n$ steps.

A prominent consequence of the definition of refinement 
is that of property preservation. It shows that refinement preserves the
CTL property of ${\sf EF} s$ which means that a reachability property true in the
refined  model \texttt{K'} 
is already true in the abstract model.
A state set $s'$ represents a property 
in the predicate transformer view of properties as sets of states. That is, properties are elements
of the power set \texttt{Pow(states K')} of the state of Kripke structure \texttt{K'}. 
The additional condition on initial states ensures that
the initial states of the abstract systems must also be represented by initial states in the
concrete system in order to guarantee preservation.
\begin{ttbox}
{\bf{theorem}} prop_pres: K \ttmeref K'  \ttImp init K \ttsubseteq \ttecal\ttimg(init K') \ttImp
\ttforall s' \ttin Pow(states K'). K' \ttvdash {\sf EF} s' \ttimp K \ttvdash {\sf EF} (\ttecal\ttimg(s'))
\end{ttbox}
The definition of refinement by Kripke structure refinement entails property preservation
and proves it as a theorem in Isabelle once for all -- a so-called meta-theorem. So, it can be
used for any concrete refinement later by instantiating the meta-theorem.

Given a global security property \texttt{global\_policy x} in a specification, the refinement
process of the IIIf (the RR-cycle \cite{kam:20a}) interleaves attack tree analysis with refinement
proofs until no more attack paths are found on this security property.
Then, the RR-cycle may achieve a theorem that represents that no more attacks are found
on this global security property.
\begin{ttbox}
  K \ttvdash \ttneg {\sf EF} \{x. \ttneg global\_policy x\}
\end{ttbox}
This theorem can be more directly expressed as ``the global policy holds everywhere''
since the \texttt{{\sf AG} f} statement corresponds to \texttt{\ttneg {{\sf EF} \ttneg f}},
a meta-theorem available in IIIf.
\begin{ttbox}
 K \ttvdash {\sf AG} \{x. global\_policy x \}
\end{ttbox}
This looks very satisfactory: we can prove that globally our requirements hold. There are, however, two 
questions related to this statement:
\begin{itemize}
\item If we arrive at the proof of {\sf AG} \texttt{\{x. global\_policy x\}}, does this mean that
  we can rely on it being true for all further refinements?
\item Does the preservation of properties by IIIf's refinement also include security properties like
  NI?
\end{itemize}  
For the second question, McLean already showed in his seminal paper \cite{mcl:94} that information flow
security properties, like NI, are properties over trace sets and thus in a different class to safety properties.
Thus, they do not adhere to the notion of properties as state sets. Consequently, the property preservation
theorem (shown above) does not apply to NI. While this has tremendous effects on model checking NI, it does not
mean that NI \textit{cannot} be preserved by refinement. We will investigate the conditions under 
which preservation of NI can still be guaranteed in the section after the next.

\subsection{Security does not refine}
\label{sec:ninoref}
As has been discussed at the end of the previous section, security unfortunately does not generally refine.
As an illustration example, we consider the notion of NI as defined and demonstrated on the
flight\-radar system in Section \ref{sec:ni}.
The confidential data in the \texttt{critloc} component was 
hidden by recording the circumvention of airplanes only in the \texttt{critpos} component.
Now, the refinement adds a component that is visible to lower classes and also visible to the attacker.
This component shows the current speed of airplanes. For simplicity, we can compute the speed based on the
grid-like positions-map and calculate the speed per move as $(\Delta x + \Delta y)$ for the horizontal $x$
and vertical $y$ components' absolute differences $\Delta$ between current and next position. Let us further
assume that the speed is calculated on the real speed, that is, the actual position, using the hidden
\texttt{contrapos} rather than the publicly proclaimed one.
In these circumstances, there will be irregularities between the speed calculated based on the ``publicly'' visible
positions and the speed calculated by the real positions (whenever the calculation is based on the hidden position
in \texttt{curpos}).

In this example, the public output and its irregularities are dependent on the hidden security critical component
\texttt{critpos}. Therefore, there is an implicit information flow from a higher classified component to a lower
one. It is a refinement but it violates the security property of NI that had been proved for the abstract specification.

\subsection{IFC allows characterizing secure refinement}
\label{sec:secrefifc}
After these preparations, we are able to express another result: 
the shadow invariant IFC allows to characterize when a refinement preserves security.
This is expressed by the following theorem stating that if the refinement map \texttt{\ttecal}
restricted to the fields of the concrete shadow is injective and is 
compatible with the shadow, that is, it is {\it monomorphic}, then the shadow invariant is preserved from
abstract to the concrete system.
\begin{theorem}[Security preservation]\label{thm:IFCpreservation}
 Let \texttt{\ttref{\ttecal}} be a refinement from an abstract system \texttt{A} to a concrete system \texttt{C}.
 Let  the refinement map \texttt{\ttecal} restricted to
 the shadow of the concrete system \ttrestr{\ttecal}{\ttshadow{C}} be monomporphic. Then 
 the shadow invariant of the abstract system \texttt{\ttIFC{A}}  implies that of the concrete system \texttt{\ttIFC{C}}
 -- both with respect to the same attacker level \texttt{a}.
\begin{ttbox}
\ttforall s. \ttrestr{\ttecal}{\ttshadow{C}}(\ttshadow{C} s) = \ttshadow{A}(\ttecal s) \ttImp
A \ttmeref C \ttImp \texttt{inj}\ttrestr{\ttecal}{\ttshadow{C}} \ttImp \ttIFC{\texttt{A}} \ttImp \ttIFC{\texttt{C}}  
\end{ttbox} 
\end{theorem}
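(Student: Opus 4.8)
The plan is to push the given concrete one-step transition up into the abstract system via the refinement $A \sqsubseteq_{\mathcal{E}} C$, use \texttt{\ttIFC{A}} there, and then pull the resulting shadow equality back down through the compatibility square and the injectivity of $\ttrestr{\ttecal}{\ttshadow{A}}$. Concretely, I would unfold \texttt{\ttIFC{C}}: assume \texttt{\ttIFC{A}}, fix an initial configuration $I$ of the concrete system together with a state $t$ such that $I \to_C^* t$ and a step $t \to_C t'$, and aim to prove $\ttshadow{C}\, t = \ttshadow{C}\, t'$. From $I \to_C^* t$ and $t \to_C t'$ we also have $I \to_C^* t'$, so applying the \texttt{refinement} definition (with $K := A$ and $K' := C$) to the pairs $(I,t)$ and $(I,t')$ gives $\ttecal(I) \to_A^* \ttecal(t)$ and $\ttecal(I) \to_A^* \ttecal(t')$ (and that $\ttecal(I)$ is an initial state of $A$, although only the two reachabilities are used below).

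Next I would prove a small auxiliary lemma lifting \texttt{\ttIFC{A}} from single steps to arbitrary paths: under \texttt{\ttIFC{A}}, $J \to_A^* v$ implies $\ttshadow{A}\, J = \ttshadow{A}\, v$ for all configurations $J$ and states $v$. This goes by an induction on the reflexive--transitive closure $J \to_A^* v$: the reflexive case is trivial, and in the transitive case $J \to_A^* w \to_A v$ the induction hypothesis gives $\ttshadow{A}\, J = \ttshadow{A}\, w$, while instantiating \texttt{\ttIFC{A}} at the configuration $J$, the (reachable) state $w$, and the step $w \to_A v$ gives $\ttshadow{A}\, w = \ttshadow{A}\, v$; transitivity closes the case. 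Applying this lemma to $\ttecal(I) \to_A^* \ttecal(t)$ and to $\ttecal(I) \to_A^* \ttecal(t')$ yields $\ttshadow{A}(\ttecal(t)) = \ttshadow{A}(\ttecal(I)) = \ttshadow{A}(\ttecal(t'))$, so in particular $\ttshadow{A}(\ttecal(t)) = \ttshadow{A}(\ttecal(t'))$.

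Finally I would transfer this equality back to the concrete level. Instantiating the compatibility hypothesis $\ttrestr{\ttecal}{\ttshadow{A}}(\ttshadow{C}\, s) = \ttshadow{A}(\ttecal(s))$ at $s := t$ and $s := t'$ gives $\ttrestr{\ttecal}{\ttshadow{A}}(\ttshadow{C}\, t) = \ttshadow{A}(\ttecal(t))$ and $\ttrestr{\ttecal}{\ttshadow{A}}(\ttshadow{C}\, t') = \ttshadow{A}(\ttecal(t'))$, so the previous paragraph forces $\ttrestr{\ttecal}{\ttshadow{A}}(\ttshadow{C}\, t) = \ttrestr{\ttecal}{\ttshadow{A}}(\ttshadow{C}\, t')$. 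Injectivity of $\ttrestr{\ttecal}{\ttshadow{A}}$ then gives $\ttshadow{C}\, t = \ttshadow{C}\, t'$, which is exactly the obligation for \texttt{\ttIFC{C}}; since $I$, $t$, $t'$ were arbitrary, \texttt{\ttIFC{C}} follows.

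The step I expect to be the main obstacle is the auxiliary lemma, or rather the granularity mismatch it has to repair: refinement only guarantees that one concrete step corresponds to an abstract \emph{path} of unknown length, whereas \texttt{\ttIFC{A}} constrains steps one at a time, and closing this gap requires carrying the reachability premise of \texttt{\ttIFC{A}} through every prefix of that path. A related point to get right is the scope of the invariant: \texttt{\ttIFC{C}} is read over reachable states, so $I$ is taken to be initial in $C$, which is what makes the initial-state clause of the \texttt{refinement} definition applicable to it. Everything else --- unfolding \texttt{refinement}, \texttt{\ttIFC{\sigma}} and the compatibility square, and the final use of injectivity --- is routine; in Isabelle this would most naturally be an \texttt{rtrancl\_induct} on the abstract transition relation combined with the relevant \texttt{\_def} simplification lemmas.
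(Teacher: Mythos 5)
Your proof is correct and follows essentially the same route as the paper: lift the concrete step to an abstract path via the refinement, extend \ttIFC{\texttt{A}} from single steps to \ttrelstar{\texttt{A}} by induction on the reflexive--transitive closure, and transfer the resulting shadow equality back through the compatibility condition and the injectivity of \ttrestr{\ttecal}{\ttshadow{A}}. The only minor difference is that you obtain $\ttshadow{A}(\ttecal(t)) = \ttshadow{A}(\ttecal(t'))$ by relating both states to $\ttecal(I)$ for an initial $I$, whereas the paper asserts $\ttecal(s) \ttrelstar{\texttt{A}} \ttecal(s')$ directly from the single concrete step; your detour is in fact the more literal application of the \texttt{refinement} definition, which only guarantees abstract reachability from images of \emph{initial} states.
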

\begin{proof} We need to show
\[
\forall s, s' \in \texttt{C}.\ s \ttrel{\texttt{C}} s' \Rightarrow \texttt{\ttshadow{C}}(s) = \texttt{\ttshadow{C}}(s')
\]
Let  $s, s' \in$ \texttt{C} with $s \ttrel{\texttt{C}} s'$. Because of \texttt{A \ttmeref C}, we have
\[
\ttecal(s) \ttrelstar{\texttt{A}} \ttecal(s')\,. \hspace{4cm}(1)\hspace{-2.5cm}
\]
Because of the first premise, we have
   \[
   \left(
   \begin{array}{ccc}
     \texttt{\ttshadow{A}}(\ttecal(s)) & = & \ttrestr{\ttecal}{\ttshadow{C}}(\ttshadow{C}(s)) \\
     \texttt{\ttshadow{A}}(\ttecal(s')) & = & \ttrestr{\ttecal}{\ttshadow{C}}(\ttshadow{C}(s')) \\
   \end{array}
   \right) \hspace{1.5cm}(2)
   \]
   Since we have \texttt{\ttIFC{\texttt{A}}} and $(1)$, we also have
   \[
   \texttt{\ttshadow{A}}(\ttecal(s)) = \texttt{\ttshadow{A}}(\ttecal(s'))
   \]
extending \texttt{\ttIFC{\texttt{A}}} inductively to $\ttrelstar{A}$ and applying it to $(1)$.
Thus, with $(2)$, we have that
   \[
    \ttrestr{\ttecal}{\ttshadow{C}}(\ttshadow{C}(s)) = \ttrestr{\ttecal}{\ttshadow{C}}(\ttshadow{C}(s'))
    \]
Finally, because of the injectivity of \texttt{\ttrestr{\ttecal}{\ttshadow{C}}} we get
  \[
    \texttt{\ttshadow{C}}(s) = \texttt{\ttshadow{C}}(s')
  \]
as we needed to show. \hfill{$\Box$}   
\end{proof}  

With the equivalence shown in Section \ref{sec:IFCNI}, we can immediately conclude the following corollary.
\begin{corollary}\label{NIeqIFC}
  With the same provisos as in Theorem \ref{thm:IFCpreservation}, we have
  \[ \text{NI}_{\texttt{A}} \Rightarrow \text{NI}_{\texttt{C}}\,.\]
\end{corollary}

\subsection{Discussion}
\label{sec:sndex}
A couple of illustrative examples, provided here informally, shall add some intuition to the
application possibilities of Theorem \ref{thm:IFCpreservation} and its limitations. Both examples are
refinements of the last refined specification in Section \ref{sec:secexshad}.

As a negative example consider first the case where the abstract datatype of the
infrastructure is extended by an additional security critical component \texttt{critact} marking also
flights as critical (for example, depending on passengers on board). The refined semantics
is now that a critical location and a critical flight are necessary for the flight to be
circumvented. That is, \texttt{move\_crit} is conditional on \texttt{critpos (n,n')\ttand critact f}
whereas \texttt{move} on \texttt{\ttneg critpos(n,n')\ttor \ttneg critact f}. As before the
circumvention is only registered in the \texttt{curpos} component. There are no illicit implicit information
flows; the concrete refined specification includes a shadow component for \texttt{critact} that checks
\texttt{(lv(critact G))\tttleqA{A} a}; the concrete shadow has \texttt{\ttIFC{\texttt{C}}}.
We also have \texttt{\ttIFC{\texttt{A}}} as proviso and \texttt{A \ttmeref C } (Section \ref{sec:secexshad}).
However, since all functions in HOL are total, \texttt{\ttrestr{\ttecal}{\ttshadow{C}}} would need to
map the new component to some existing one in the abstract and thus cannot be injective.
Thus we cannot apply Theorem \ref{thm:IFCpreservation} to deduce \texttt{\ttIFC{\texttt{C}}} (even though it is
true).

The second example illustrates a positive case: consider again the refinement from the same already secure example
we have used in \ref{sec:ninoref}.
In this example an additional new component \texttt{speed} is added that is visible to the attacker and is calculated on
the $\Delta$ between real positions (see Section \ref{sec:ninoref}). 
In this second example, the public output and its irregularities are dependent on the hidden security critical component
\texttt{critpos}. Therefore, the component \texttt{\ttshadow{C}} will in the fields \texttt{shadow\_critloc} and
\texttt{shadow\_critpos} shrink in each step (as also state transitions steps with no irregularities contain the
information that this position is not critical). Hence, the premises of Theorem \ref{thm:IFCpreservation}
are not fulfilled, that is, the compatibility condition
$\ttforall s. \ttrestr{\ttecal}{\ttshadow{C}}(\ttshadow{C}\, s) = \ttshadow{A}(\ttecal\, s)$ is not true. Therefore,
we cannot apply Theorem \ref{thm:IFCpreservation}; the refinement is not secure.

Theorem \ref{thm:IFCpreservation} detects attacks where the refinement adds a new component
\texttt{\ttanyany{c}{1}} to the concrete state 
dependent on a confidential component \texttt{\ttanyany{c}{0}} already existing in the abstract state. 
It thus addresses examples like the ones described in Section \ref{sec:ninoref} that add an additional component for
speed (or the decoy QKD case study briefly mentioned \cite{knpw:25}). 
If the level of the new component is visible to the attacker, that is, \texttt{lv(\ttanyany{c}{0}) \tttleA{A} a}
then the shadow of the concrete state would not be compatible with the one of the abstract failing the theorem's
precondition 
\texttt{\ttrestr{\ttecal}{\ttshadow{C}}(\ttshadow{C} s) = \ttshadow{A}(\ttecal s)} because in the abstract
the shadow invariant \texttt{\ttIFC{\texttt{A}}} holds.
If, on the other side, the new component \texttt{\ttanyany{c}{1}} is confidential, that is, above the attacker's viewpoint
\texttt{a \tttleqA{A} lv(\ttanyany{c}{0})}, it might potentially introduce new information flows. This case is excluded
by the precondition that \ttrestr{\ttecal}{\ttshadow{C}} must be injective.

However, in the other cases where the refinement introduces only new components \texttt{\ttanyany{c}{0}}
at a level visible to the attacker, i.e. \texttt{a \tttleqA{A} lv(\ttanyany{c}{0})}, and those
components do not depend on confidential components, Theorem \ref{thm:IFCpreservation} grants that
\texttt{\ttIFC{\texttt{C}}} holds -- provided the refinement map \texttt{\ttecal} is compatible on the shadows.

\section{Conclusions and Related Work}
\label{sec:concl}
In this paper, we have re-considered the application of the IIIf on flightradar systems \cite{kam:24a} and their extension
to information flow control.
We investigated whether a shadow definition could be applied to establish a security preserving refinement for IIIf.
Inspired by the concept of ``ignorance preservation'' presented by Morgan
\cite{mor:09} we have defined a shadow construct for IIIf validating it on the example.
We have investigated how the ignorance preservation can be expressed in the IIIf by an
invariant \texttt{\ttIFC{\texttt{a}}}. Furthermore, we have proved in this paper that the notion of
\texttt{\ttIFC{\texttt{a}}} is equivalent to $\text{NI}_{\texttt{a}}$.
We have then revisited the notion of refinement in IIIf \cite{kam:19a,kam:20a}. Using the shadow and
\texttt{\ttIFC{\texttt{a}}}, we could now characterize the refinement of information flow security 
establishing Theorem \ref{thm:IFCpreservation} that shows that under a compatibility condition on the shadow
with the refinement map \texttt{\ttecal}, security is preserved from the abstract to the concrete (refined) system.

Morgan \cite{mor:09} uses an epistemic logic for the definition of the shadow. This logic is used to express ignorance
as ${\sf P_a}(x = C)  = \neg {\sf K_a} \neg(x = C)$ (``possibly'' {\sf P}) - a common construct in epistemic
logic where knowledge is expressed as ${\sf K_a} p$ for ``agent {\sf a} knows $p$''.
Morgan introduces it as an add on to the refinement calculus because in their framework it is crucial to have the
logical connectives to express uncertainty as a post- or precondition. Thereby, it is possible for them to express
which refinements they accept using knowledge and more importantly ignorance (of the attacker that they do not model
explicitly but implicitly).
By contrast, we do not introduce a dedicated logic to express knowledge or ignorance because we have an explicit
model of actors and their security classes and thus their access to data. Knowledge is implicitly represented by
the values of the components that actors can see and also what they can distinguish in the behaviour of the
system.

The concept of shadow has inspired a wide range of follow up research. Recent successful uses of the concept
are an application for resilience against optimization vulnerabilities \cite{dgpw:24}. This work also uses Isabelle
as a basis but focuses on programming languages.
A recent use of the post-to-pre-style reasoning of the shadow is \cite{Chen2025} where
an expectation-based logic for probabilistic programs is used for the analysis of information-leaks
in source-level programs.
Furthermore, \cite{bgn:21} present an epistemic approach to ignorance preservation for refinements.
They take a similar approach to ours in enforcing that ``refinement steps must not induce observer knowledge
that is not already available in the abstract model''. This conjecture is similar to the injectivity
condition, we use in Theorem \ref{thm:IFCpreservation}. In difference to \cite{bgn:21}, we focus on a specific formal model
of infrastructures in the IIIf which directly supports automated reasoning in Isabelle.
Another experimentation with the relationship of distributed GIS systems and machine
learning within the  IIIf leads to an extension by a notion of Federated Learning for Differential
Privacy \cite{kam:24b}. To this end, a distributed version of NI based on probabilities is formalized
vaguely inspired by Volpano and Smith's \cite{vs:99} intuition of how to express probabilistic NI
for distributed systems which differs substantially from the possibilistic notion investigated here.

\end{document}